\def \polylog{\operatorname{polylog}}
\theoremstyle{definition}
\newtheorem*{def*}{Definition}
\theoremstyle{definition}
\theoremstyle{definition}
\newtheorem*{prf*}{Proof}
\theoremstyle{definition}
\newtheorem*{prfthm*}{Proof of Theorem}
\theoremstyle{theorem}
\newcommand{\etal}{et al. }
\newcommand{\shortOnly}[1]{\ifthenelse{\boolean{short}}{#1}{}}
\newcommand{\onlyShort}[1]{\ifthenelse{\boolean{short}}{#1}{}}
\newcommand{\longOnly}[1]{\ifthenelse{\boolean{short}}{}{#1}}
\newcommand{\onlyLong}[1]{\ifthenelse{\boolean{short}}{}{#1}}
\newcommand{\shortLong}[2]{\ifthenelse{\boolean{short}}{#2}{#1}}
\newcommand{\longShort}[2]{\ifthenelse{\boolean{short}}{#2}{#1}} 
\begin{document}
%
\title{Run for Cover: Dominating Set via Mobile Agents}
%
%
\author{Prabhat Kumar Chand\inst{}\orcidID{0000-0001-6190-4909} \and
Anisur Rahaman Molla\inst{}\orcidID{0000-0002-1537-3462} \and
Sumathi Sivasubramaniam \inst{}\orcidID{0000-0003-3605-6498}}
\authorrunning{P. Chand et al.}
%
 \institute{Indian Statistical Institute, Kolkata \\
 \email{\{pchand744, anisurpm, sumathivel89\}@gmail.com}}
\maketitle              
\vspace{-0.5cm}
\begin{abstract}
Research involving computing with mobile agents is a fast-growing field, given the advancement of technology in automated systems, e.g., robots, drones, self-driving cars, etc. Therefore, it is pressing to focus on solving classical network problems using mobile agents. In this paper, we study one such problem-- finding small dominating sets of a graph $G$ using mobile agents. Dominating set is interesting in the field of mobile agents as it opens up a way for solving various robotic problems, e.g., guarding, covering, facility location, transport routing, etc. In this paper, we first present two algorithms for computing a {\em minimal dominating set}: (i) an $O(m)$ time algorithm if the robots start from a single node (i.e., gathered initially), (ii) an $O(\ell\Delta\log(\lambda)+n\ell+m)$ time algorithm, if the robots start from multiple nodes (i.e., positioned arbitrarily), where $m$ is the number of edges and $\Delta$ is the maximum degree of $G$, $\ell$ is the number of clusters of the robot initially and $\lambda$ is the maximum ID-length of the robots. Then we present a $\ln (\Delta)$ approximation algorithm for the {\em minimum} dominating set which takes $O(n\Delta\log (\lambda))$ rounds.  

\keywords{Dominating Set  \and Mobile Agents \and  Distributed Network Algorithms \and Approximation Algorithms \and Time Complexity \and Memory Complexity \and Maximal Independent Set}
\end{abstract}

\section{Introduction}
Research on autonomous mobile agents (we interchangeably use the terms \textbf{{\em agent}} and \textbf{{\em robot}} throughout the paper) has become an area of significant interest in the field of distributed computing. As autonomous agents become part of everyday life (in the form of robots, drones, self-driving cars, etc.,) the area becomes more and more relevant. On the other hand, the dominating set problem is a well-researched classical graph problem. A dominating set $D$ of a graph $G=(V,E)$ is a subset of the nodes $V$ such that for any $v \notin D$, $v$ has a neighbour in $D$. Finding a dominating set has several practical applications. For example, in wireless communications, the dominating set of the underlying network is useful in finding efficient routes within ad-hoc mobile networks.  They are also useful in problems such as \textit{document summarising}, and for designing \textit{secure systems} for \textit{electrical grids}, to mention a few. 

Covering problems, such as vertex covers, maximal independent sets (MIS), and dominating sets, also have real-world applications in the field of mobile agents. For example, a minimum dominating set can form the charge stations or parking places for autonomous mobile robots. The maximal independent sets can be used to solve the same but also ensure that any two robots are not close in real life. Both MIS and dominating sets can find a place in the guarding problem, i.e., placing mobile robots such that robots guard (cover) an entire polygon. In the mobile robot setting, MIS has been explored in~\cite{das2022maximum,pramanick2023filling,kamei2020asynchronous}. However, in these works, the robots have some amount of visibility (they have some knowledge of the graph) while ours have zero knowledge. While both MIS and dominating sets are of great interest, in this paper, we limit ourselves to solving the dominating set problem in the field of mobile robots. We note, however, that for the minimal dominating set case, our produced dominating sets are also maximal independent sets.

While the problem is a classic problem in graph theory, several attempts have been made to solve it in distributed computing research as well~\cite{JRS02,KW03,KTW04,JKY19}.  
In the field of mobile robots, we believe ours is the first attempt to solve the dominating set problem. Our algorithms rely on the ideas used for \emph{dispersion} to achieve a solution for the dominating set problem. Dispersion, first introduced by the authors in~\cite{dis01} has been well studied for various configurations of robots (see ~\cite{dis02,dis03,dis04_cluster,dis08_global}). The problem is briefly: on an $n$ node graph $G$, in which there is a configuration of robots $\mathcal{R}, |\mathcal{R}|\leq n$, we want to ensure that there is at most one robot on each node. We take advantage of the fact that most dispersion protocols involve the exploration of $G$, and develop algorithms for calculating dominating sets. In the next subsection, we mention our major results.
\subsection{Our Results:}
In this paper, we show how to compute a minimal dominating set and an approximate minimum dominating set on an anonymous graph using mobile robots. Let $G$ be a connected and anonymous graph of $n$ nodes and $m$ edges with maximum degree $\Delta$. Suppose $n$ robots (with distinct IDs) are distributed arbitrarily over the nodes of $G$. We develop algorithms for the robots to work collaboratively and identify small dominating sets of $G$. In particular, our results are:
\begin{itemize}
    \item an $O(m)$ time algorithm for the robots to compute a minimal dominating set on $G$ when all robots are gathered at a single node initially. The set of dominating nodes also forms a MIS. 
    \item an $O(\ell\Delta\log(\lambda)+n\ell+m)$ time algorithm for the robots to compute a minimal dominating set on $G$ when the robots are placed arbitrarily at $\ell$ nodes initially. 
    \item an $O(n\Delta \log (\lambda))$ time algorithm for a $\ln (\Delta)$ approximation solution to the minimum dominating set on $G$, where $\lambda$ is the maximum ID-length of the robots.
\end{itemize}
All our algorithms require that the robots have at most $O(\log (n))$ bits of memory. In a recent work in~\cite{d2d23}, the authors solved a related problem called the Distance-2-Dispersion problem (See Section\ref{related_works} for details) which also produces a MIS in special cases (when the number of robots is greater than the number of nodes in the graph) in $O(m\Delta)$ rounds with $O(\log(\Delta))$ bits of memory in each robot.

\subsection{Related Works}\label{related_works}
Finding small dominating sets for graphs is one of the most fundamental problems in graph theory which along with its variants has been extensively studied for the last four decades. The dominating set problem and the set cover problem are closely related and the former one can be regarded as a special case of the latter. To find a minimum set cover for arbitrary graphs has been shown to be NP-hard \cite{Intractability, Karp72}. 

There have been a few studies on the dominating set and related problems for various distributed models. In~\cite{JRS02}, Jia \etal gave the first efficient distributed implementation of the dominating set problem in the CONGEST model. Their randomised algorithm, which is the refinement of the greedy algorithm adapted from~\cite{LH00} takes $O(\log (n) \log(\Delta))$ rounds and provides a $\ln(\Delta)$ -  optimal dominating set in expectation ($\Delta$ is the maximum degree among the $n$ nodes of the graph). It has at most a constant number of message exchanges between any two nodes. In ~\cite{SSR10}, Evan Sultanik \etal gave a distributed algorithm for solving a variant of the art gallery problem equivalent to finding a minimal dominating set of the minimal visibility graphs. Their algorithm 
runs in a number of rounds on the order of the diameter of the graph producing solutions that are within a constant factor of optimal with high probability. In~\cite{KW03}, Kuhn and Wattenhofer gave approximation algorithms for minimum dominating sets using LP relaxation techniques. Their algorithm computes an expected dominating set of size $(k\Delta^{\frac{2}{k}}\log(\Delta))$ times the optimal and takes $O(k^2)$ rounds for any arbitrary parameter $k$. Each node sends $O(k^2\Delta)$ messages, each of size $O(\log(\Delta))$. With $k$ chosen as some constant, this algorithm provides the first non-trivial approximation to the minimal dominating set which runs in a constant number of rounds. Fabian Kuhn \etal in ~\cite{KTW04} gave time lower bounds for finding minimum vertex cover and minimum dominating set in the context of the classical \textit{message passing} distributed model. They showed that the number of rounds required in order to achieve a constant or even only a poly-logarithmic approximation ratio is at least $\Omega(\sqrt{\frac{\log (n)}{\log (\log (n))}})$ and $\Omega(\sqrt{\frac{\log (\Delta)}{\log (\log (\Delta))}})$. In \cite{JKY19} the authors gave deterministic approximation algorithms for the minimum dominating set problem in the CONGEST model with an almost optimal approximation guarantee. They gave two algorithms with an approximation ratio of $(1+\epsilon)(1+\log(\Delta+1))$ running respectively in $O(2^{O(\sqrt{\log (n)\log(\log (n)})})$ and $O(\Delta \polylog(\Delta)+\polylog(\Delta)\log^\star (n))$ for $\epsilon>\frac{1}{\polylog(\Delta)}$. The paper also explores the problem of connected dominating sets using these algorithms, giving a $\ln(\Delta)$ - optimal connected dominating set. 

In our paper, we compute minimal dominating set and approximate minimum dominating set on an arbitrary anonymous graph with the help of mobile robots with limited memory. Our algorithms use a key procedure called \textit{dispersion} that re-positions $k\leq n$ mobile robots (which initially were present arbitrarily on a $n$-node graph $G$) to a distinct node of $G$ such that one node has at most one robot. Dispersion of mobile robots is a well-studied problem in distributed robotics in different settings. It is introduced in ~\cite{dis01} and saw its development over the years through several papers~\cite{dis02,dis03,dis04_cluster,dis08_global,dis06_byz1,dis07_byz2,dis12_fault, dis_fault}. 
To date, \cite{KS21} provides the best-known result for solving dispersion in $O(m+k\Delta)$ rounds and using $\log(k+\Delta)$ bits memory per robot ($k$ being the number of robots). 

In a recent work by Kaur \etal~\cite{d2d23}, the authors formulated and solved the Distance-2-Dispersion (\textsc{D-2-D}) problem in the context of mobile robots on an anonymous graph $G$, which is a closely related problem to ours. In the Distance-2-Dispersion problem, each of the $k$ robot settles at some node satisfying these two conditions: (i) two robots cannot settle at adjacent nodes (ii) a robot can only settle at the node already occupied by another robot if and only if there's no more unoccupied node that satisfies condition (i). They showed that, with $O(\log(\Delta))$ bits of memory per robot, the (\textsc{D-2-D}) problem can be solved in $O(m\Delta)$ rounds, $\Delta$ being the highest degree of the graph. Their algorithm requires no pre-requisite knowledge of the parameters $m,n$ and $\Delta$. Additionally, they show that if the number of robots $k\geq n$ (number of vertices in $G$), the nodes with settled robots form a maximal independent set for $G$.

\section{Model and Problem Definition}\label{sec:model}

\textbf{Graph: } The underlying graph $G(V,E)$ is connected, undirected, unweighted and anonymous with $|V| = n$ nodes and $|E| = m$ edges. The nodes of $G$ do not have any distinguishing identifiers or labels. 
The nodes do not possess any memory and hence cannot store any information. The degree of a node $v\in V$ is denoted by $\delta_v$ and the maximum degree of $G$ is $\Delta$. Edges incident on $v$  are locally labelled using port numbers in the range $[1,\delta_v]$. A single edge connecting two nodes receives independent port numbering at the two end. The edges of the graph serve as \emph{routes} through which the robots can commute. Any number of robots can travel through an edge at any given time. \\\\
\textbf{Robots: }We have a collection of $n$ robots $\mathcal{R} = \{r_1,r_2,...,r_n\}$  residing on the nodes of the graph. Each robot has a unique ID in the range $[0,n^c]$, where $c\geq 1$ is arbitrary; and has $O(\log (n))$ bits to store information. Two or more robots can be present (\emph{co-located)} at a node or pass through an edge in $G$. However, a robot is not allowed to stay on an edge. A robot can recognise the port number through which it has entered and exited a node.\\

\noindent \textbf{Communication Model:} We consider a synchronous system where the robot are synchronised to a common clock. We consider the local communication model where only co-located robots (i.e., robots at the same node) can communicate among themselves.\\

\noindent\textbf{Time Cycle: } Each robot $r_i$, on activation, performs a $Communicate-Compute-Move$ $(CCM)$ cycle as follows.
\begin{itemize}
    \item[-] \textbf{Communicate:} $r_i$ may communicate with other robots present at the same node as itself. 
    \item[-] \textbf{Compute:} Based on the gathered information and subsequent computations, $r_i$ may perform all manner of computations within the bounds of its memory. 
    \item[-] \textbf{Move:} $r_i$ may move to a neighbouring node using the computed exit port. 
\end{itemize}

\noindent A robot can perform the $CCM$ task in one time unit, called {\em round}. The \textbf{time complexity} of an algorithm is the number of rounds required to achieve the goal. The \textbf{memory complexity} is the number of bits required by each robot to execute the algorithm.   

We now give our problem definition.
\begin{definition}[Problem Definition]
Consider an undirected, connected $n$-node simple anonymous graph $G$ with $n$ mobile robots placed over the nodes of $G$ arbitrarily. Let $\Delta$ denote the highest degree of a node in $G$.\\

\noindent \textbf{Minimal Dominating Set.} The robots, irrespective of their initial placement, rearrange and colour themselves in such a way that
i) there is a robot at each node of $G$ 
ii) there is a self-identified subset of robots, coloured black, on $D\subseteq G$ which forms a minimal dominating set for $G$.\\
 
\noindent \textbf{Approximate Minimum Dominating Set.} The robots, irrespective of how they are initially placed, rearrange and colour themselves in such a way that a dominating set for $G$ of size at most $\mathbf{\alpha} |D^*|$ is identified, where $D^*$ is a minimum dominating set. Here $\mathbf{\alpha}$ is the approximation ratio to the minimum dominating set.\\ 

\noindent Our goal is to design algorithm for solving the above problems as fast as possible and keeping $\alpha$ as small as $\ln (\Delta)$. 
\end{definition}

\section{Preliminaries: DFS Traversal and \textsc{Procedure$\_$MYN}}\label{prelims}
In this section, we present two subroutines that we use in our algorithms. The first one is a simple depth-first search (DFS) traversal that allows the robots to explore the entire graph and disperse at the same time. The second one is a procedure that allows two robots on neighbouring nodes to meet each other if required.
 
\subsection{DFS Traversal Protocol}\label{DFS}
\noindent  We consider a collection of $n$ robots, $\mathcal{R}=\{r_1,r_2,\dots,r_n\}$ placed initially at a single node called the $root$.  For simplicity, we assume that $r_1$ is the lowest ID robot and $r_n$, the highest. The objective of the DFS Traversal Protocol is to explore the graph in a DFS manner until all the robots are dispersed. In addition, since $|\mathcal{R}|=n$, it is ensured that  $G$ has a distinct robot stationed at each of its nodes after the end of the protocol. We recall that, for a node $w$, the ports are numbered from $1$ to $\delta_w$, where $\delta_w$ is the degree of $w$. 

To execute DFS, each robot $r$ is provisioned with the following variables: 
\begin{itemize}
    \item $parent - $ stores the port number through which $r$ has entered an empty node and has settled (for a settled robot). 
    \item $child - $ for an unsettled robot $r$ it stores the port number last taken while entering/exiting a node. $r$, when settled stores the port number that the other robots used for exiting a node except when they entered the node in a forward mode for a second or subsequent time. $child$ is set to $0$ initially.
    \item $state - $ to indicate if $r$ is in a \textit{forward} mode or \textit{backtrack} mode. $state$ is initially set to \textit{forward}
    \item $treelabel - $ to differentiate between different DFSs arising from different clusters (meaningful only in arbitrary initial robot configuration). $treelabel$ is the ID of the smallest robot in a specific cluster. 
    \item $settled - $ to indicate whether a robot is settled ($1$) or unsettled ($0$)
\end{itemize}
\pagebreak

\noindent\textbf{\\Execution (Update Procedure)\\\\}In the first round, the robot $r_1$ assigns $r_1.settled\leftarrow1$. The remaining robots $\mathcal{R}\setminus \{r_1\}$ decide on the minimum port number available at $root$ (which is $1)$,  set $r_1.child$ and exit through port $r_1.child$ to a new empty node $u$. In the next round, the robot $r_2$ settles at $u$ and the remaining robots similarly exit through $r_2.child$ leaving $r_2$ at $u$. Let, at any stage of the DFS, the robots $\{r_{i+1},r_{i+2},\dots,r_n\}$ arrive at a node $w$ through port $p_w$ with degree $\delta_w$. Now, the robots decide on the next course of action based on its $state$ variable:
\begin{itemize}
    \item \textbf{$state={forward}$}: If $w$ is empty, set $r_{i+1}.settled\leftarrow 1$, $r_{i+1}.parent=r_{i+1}.child$. The remaining robots $\{r_{i+2},r_{i+3},\dots,r_n\}$ set $child\leftarrow(child+1)\mod{\delta_w}$ and then $r_{i+1}.child\leftarrow child$. If $child=r_i.parent$, it implies that all ports at $w$ have been used and hence the remaining unsettled robots set $state\leftarrow \textit{backtrack}$. Otherwise if $w$ is non-empty, $\{r_{i+1},r_{i+2},\dots,r_n\}$ set their $state$ to $\textit{backtrack}$. In both cases, the unsettled robots now move out through $child$. 

    \item \textbf{$state={backtrack}$}: Let $r_j$ be the settled robot at $w$. The robots $\{r_{i+1},r_{i+2},\dots,r_n\}$ set $child\leftarrow (child+1)\mod{\delta_w}$ and $r_j.child\leftarrow child$ (this updates the $child$ port on the settled robot). If $child\neq r_j.parent$ (implying an available port at $w$), the robots $\{r_{i+1},r_{i+2},\dots,r_n\}$ switch their $state$ to $\textit{forward}$ and exit through $child$ port.
\end{itemize}
The protocol ends when there are no more unsettled robots remaining. Out of the $m$ edges, each $m-(n-1)$ non-tree edge is traversed a maximum of four times (twice from either end) and the tree edges are traversed twice. Hence, the maximum number of rounds required to execute the DFS Traversal Protocol is $4(m-n+1)+2(n-1)=4m+2n-2$. So, we have the following lemma.
\begin{lemma}\label{DFS_Lemma}
    Let $G$ be a $n$-node arbitrary, connected and anonymous graph with a maximum degree $\Delta$. Let $n$ mobile robots with distinct IDs be placed initially at a single node. Then, the DFS Traversal Protocol disperses each of the mobile robots into $n$ distinct nodes in $O(m)$ rounds. 
\end{lemma}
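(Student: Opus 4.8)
The plan is to separate the argument into a correctness part and a running-time part. For correctness, I would show that the collective motion of the unsettled robots faithfully simulates a single-token depth-first traversal of $G$ started at $root$, and then read off dispersion as a corollary. The natural tool is an invariant maintained at the start of every round and proved by induction on rounds. Informally the invariant says: (a) all currently unsettled robots sit together as one ``bundle'' on a single node $w$ (the current DFS head); (b) every node visited so far carries exactly one settled robot, and the $parent$ pointers of the settled robots describe a DFS tree $T$ rooted at $root$; (c) the $child$ pointer stored on the settled robot at a visited node $v$ records the port through which the bundle most recently left $v$, equivalently the last port (in cyclic order starting just after $parent$) that has been explored from $v$; and (d) the $state$ carried by the bundle correctly reflects whether the move just completed was a descent or a backtrack.

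Granting the invariant, the case analysis in the Update Procedure is exactly the transition function of textbook DFS: in \emph{forward} state an empty head node is claimed by the smallest unsettled robot (which sets $parent$), and exploration proceeds through the next port $(child+1)\bmod \delta_w$; a non-empty head node triggers a switch to \emph{backtrack}; and in \emph{backtrack} state the bundle either discovers an unexplored port and flips back to \emph{forward}, or, once $child$ has cycled back to $parent$, continues backtracking along the parent edge. The one delicate point is the clause that a settled robot does \emph{not} refresh its $child$ pointer when the bundle re-enters it in \emph{forward} mode a second or later time: this is precisely what prevents already-explored ports from being tried again, and hence what keeps the traversal finite and the edge-visit counts bounded. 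I would verify this clause against invariant part (c) explicitly.

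Dispersion then follows immediately. Since $G$ is connected, the simulated DFS visits every node; the first visit to any node occurs when that node is empty, so a fresh robot settles there; and since there are exactly $n$ robots and $n$ nodes, the process ends with one robot per node and no unsettled robots remaining. For the round complexity, I would bound edge traversals: partition $E$ into the $n-1$ tree edges of $T$ and the $m-(n-1)$ non-tree edges. Each tree edge is crossed exactly twice (once on the descent that claims the child, once on the backtrack). Each non-tree edge $\{u,v\}$ is crossed at most twice from each endpoint, since the only way the bundle uses it is to step from $u$ in \emph{forward} state onto the already-occupied $v$ and bounce straight back, and symmetrically from $v$. Because each synchronous round is one move of the bundle, the total number of rounds is at most $2(n-1)+4\bigl(m-(n-1)\bigr)=O(m)$.

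The main obstacle is stating the induction invariant sharply enough — in particular pinning down exactly what ``$child$ records the last explored port'' means in the presence of the modular arithmetic and the ``no re-update on repeat forward visits'' rule — so that the forward/backtrack case analysis can be checked step by step. Once the invariant is correctly formulated, both the dispersion conclusion and the $O(m)$ bound are short.
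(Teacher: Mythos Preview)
Your proposal is correct and matches the paper's approach: the paper's justification is exactly the edge-counting argument you give, namely that each of the $n-1$ tree edges is traversed twice and each of the $m-(n-1)$ non-tree edges at most four times (twice from each endpoint), yielding $4(m-n+1)+2(n-1)=O(m)$ rounds. Your invariant-based correctness argument is more detailed than what the paper provides (the paper essentially takes the DFS simulation as self-evident from the Update Procedure description), but the substance is the same.
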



    


\subsection{\textsc{Procedure$\_$MYN} ({\sc Meet-Your-Neighbor})}
Since the port ordering of nodes is different, it can be tricky to ensure that two robots on neighbouring nodes meet. Also, it is difficult to time the movement of robots and guarantee that two neighbours meet without access to a global clock. Since, it can be essential for two robots to pass information to each other, arranging ways to ensure such a meeting can be beneficial.  \textsc{Procedure$\_$MYN} is helpful in ensuring that a robot is able to communicate with all its neighbours at least once when required.  We use the pairing procedure \textsc{Procedure$\_$MYN} to ensure that during a scan for neighbours, a robot meets all its neighbours. For this, the algorithm essentially exploits the bits representing the IDs of the robots. Let $\lambda$ denote the largest ID among all the $n$ robots. Therefore, the robots use a $\log (\lambda)$ bit field to store the IDs. \textsc{Procedure$\_$MYN} runs in phases. Each phase consists of $\Delta$ rounds and there are a total of  $\log (\lambda)$ phases. Each phase corresponds to a bit in the field (with robots having IDs less than $\log (\lambda)$ bits padding the rest with $0$s). The steps in a phase are simple, starting with the rightmost bit, if the bit is $1$, the robot uses $\Delta$ rounds to visit all its neighbours. If the bit is $0$, the robot waits at its node for visitors. Clearly, since all robots have unique IDs, for any two pairs of neighbouring robots, there exists at least one round in which the robots have different bits and meet. A detailed procedure of the MYN procedure can be found in Algorithm~\ref{alg: procedure_P}. 
Now, to find the neighbouring robots, a robot $r_i$ stationed at a node $u$ does the following.
\begin{enumerate}\label{procedure1}
    \item Checks the rightmost bit in its ID field. 
    \item If the bit is $1$, $r_i$ goes to each of its neighbours one by one from $u$ following ascending order of the port numbers and back. Note that, from $u$, there are ports numbered from $1$ to $\delta_u$ each leading to a distinct neighbour of $u$.
    \item If $r_i$ finds a robot $r_j$ (or other multiple robots) in its neighbour, it can exchange required information.
    \item Otherwise, if the bit id $0$, the robots sits at $u$ and waits for $\Delta$ rounds. 
    \item In the next phase of $\Delta$ rounds, $r_i$ now checks its second bit from the right and repeats the same process as described in the previous steps.   
    \item The process continues for $\log \lambda$ phases, ensuring every bit in the ID field has been scanned. If there are no more bits remaining in $r_i.ID$ and $\log\lambda$ rounds have not been completed (implying that $r_i$ has a smaller ID length than $\log\lambda$ bits), $r_i$ assumes the current bit as $0$ and stays back at its own node for the rest part of the algorithm. 
\end{enumerate}
Clearly, the procedure takes no more than $O(\log (\lambda))$ rounds to ensure that two specific neighbours meet, thus to ensure that a robot meets with all its neighbours it takes no more than $O(\Delta\log (\lambda))$ rounds. Hence,
\begin{lemma}\label{lem: pairing}
 \textsc{Procedure$\_$MYN} ensures that a robot meets all its neighbours at least once and takes no more than $O(\Delta\log (\lambda))$ rounds.    
\end{lemma}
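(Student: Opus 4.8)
The plan is to establish the two claims of the lemma separately: that Procedure\_MYN is \emph{correct} (every robot becomes co-located with each of its neighbours at least once), and that it runs in $O(\Delta\log(\lambda))$ rounds.

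For correctness, I would first pin down the timing picture. Since the system is synchronous and all robots begin Procedure\_MYN in the same round, and since every phase is a block of the same fixed length (a value in $\Theta(\Delta)$; $2\Delta$ rounds suffice for a degree-$\Delta$ node to make one round trip along each incident port), the $\log(\lambda)$ phases are globally aligned: in phase $t$ every robot is inspecting the $t$-th least significant bit of its ID, with IDs shorter than $\log(\lambda)$ bits treated as zero-padded and their owners idling once the bits run out (step~6). Next I would take an arbitrary edge $\{u,v\}$ with a robot $r$ at $u$ and a robot $r'$ at $v$. Because all $n$ IDs are distinct, the two $\log(\lambda)$-bit strings differ in at least one position $t$; without loss of generality $r$ has a $1$ there and $r'$ has a $0$. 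In phase $t$, $r'$ stays at $v$ for the entire phase, while $r$ leaves $u$ along ports $1,2,\dots,\delta_u$ in turn, returning to $u$ after each excursion. Exactly one of these ports, say $k$, is the port at $u$ leading to $v$; when $r$ traverses port $k$ it is co-located with the waiting $r'$, and the Communicate step of that round lets them exchange information. Since the edge was arbitrary, every robot meets each of its neighbours at least once (indeed, once per bit position in which their IDs differ).

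For the running time, each phase costs $O(\Delta)$ rounds, since a waiting robot idles and a walking robot performs at most $\delta_v \le \Delta$ round trips; over the $\log(\lambda)$ phases this totals $O(\Delta\log(\lambda))$ rounds, as claimed.

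The one delicate point I would be careful to spell out is the global alignment of phases: the argument needs ``phase $t$'' to denote the same interval of rounds at $u$ and at $v$ simultaneously, which is exactly why the phase length must be the uniform bound $\Theta(\Delta)$ rather than the local $2\delta_v$, and why a robot whose ID has fewer than $\log(\lambda)$ bits must keep waiting through the remaining phases rather than stop early. Given that, the meeting guarantee reduces to the pigeonhole observation that distinct binary strings of equal length differ in some position, and no further case analysis is needed.
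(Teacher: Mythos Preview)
Your proposal is correct and follows the same approach as the paper: the paper's argument (given in the text preceding the lemma rather than as a formal proof) is precisely that distinct IDs must differ in at least one bit, and in the phase corresponding to that bit one robot stays put while the other visits all its neighbours. You have simply spelled out more carefully the points the paper leaves implicit, in particular the global alignment of phases and the need for zero-padding of short IDs.
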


\begin{algorithm}[ht!]
    \caption{\textsc{Procedure$\_$MYN}}
    \begin{algorithmic}[1]
    \Require{A mobile robot $r$ with $\log\lambda$ bit ID field - $b_1b_2...b_{\log \lambda}$; }
    \Ensure{$r$ meets every other robot in its neighbourhood}
    \Statex
    \For{$i=\log\lambda$ to $1$}         
        \If{$b_i$ is $1$}
            \For{$j=1$ to $\Delta$}
            \State $r$ visits neighbour at the port $\delta_j$. If the neighbour contains a robot, it can communicate with it.
            \EndFor
            
        \Else
            \State $r$ remains at its node for $\Delta$ rounds.
        \EndIf
   
    \EndFor
    \end{algorithmic}
\end{algorithm}\label{alg: procedure_P}

\section{Algorithm for Minimal Dominating Set}
In this section, we show that we can achieve a minimal dominating set for both the rooted and arbitrary initial configuration. In the rooted case, initially, all robots are gathered at a single root, while in the arbitrary case, the robots are gathered in clusters across the graph. 
\subsection{Single Source Initial Configuration} \label{sec:rooted-config}
Let us first consider the case where all the mobile agents are initially housed at a single node of the graph. We refer to such a configuration as a single source or \emph{rooted initial configuration}. We design an algorithm for the mobile agents to work collaboratively to compute a minimal dominating set of $G$. In particular,  agents identify a subset $D$ of the vertices $V$ such that $D$ is a minimal dominating set of $G$. Given $n$ agents that start from a single source, our algorithm takes $O(m)$ rounds to find such a $D$, where $m$ is the number of edges in the graph. Agents are not required to know any graph parameters.     

To solve the dominating set problem in the rooted initial configuration, we 
modify the standard DFS traversal (see Section~\ref{DFS}) to allow the agents to simultaneously traverse and compute the dominating set $D$. 

To ensure that the robots know if they are part of the dominating set for $G$ in the end, each robot is \emph{coloured} accordingly. The robots that occupy the nodes $D\subset V$  are coloured ``black'' to distinguish them from other mobile robots. Thus, the set of mobile robots coloured black forms the dominating set for $G$. To ensure proper colouring, each robot uses an additional variable $r_i.colour$ along with the ones used for executing the DFS Traversal.
$r_i.colour$ is used to classify the nature of the robot with respect to the dominating set. Each robot, at a given time, has one out of the three colours - \textit{black}, \textit{grey} and \textit{white}. Initially, all robots are \textit{white} in colour. Robots change their colour to \textit{black} once it becomes a member of the dominating set. Robots that are adjacent to a black coloured robot are coloured \textit{grey}.

We will now describe our algorithm in detail. As mentioned before, our algorithm for creating a dominating set follows a modified version of the depth-first algorithm described in Section~\ref{DFS}.
We do so by the addition of an extra step each time a robot explores a new node, to decide a robot's colour.

As in the DFS, the collection of robots $\mathcal{R}=\{r_1,r_2,\dots,r_n\}$ start the algorithm from the $root\in V$.  The robots leave the smallest ID robot $r_1$ at $root$, where $r_1$
settles and colours itself \textit{black}. In general, the robots navigate the graph via the DFS protocol with the extra step of deciding the colour of the robot that settles. This is decided as follows.
Each time the robots decide to settle a robot a the node $u$, they visit all neighbours of $u$. If they find a \textit{black} settled robot among $u$'s neighbours, then the robot settling at $u$ colours itself \textit{grey}, else it colours itself \textit{black}. However, the robot that has just arrived at a new empty node with its $parent$ coloured \textit{black} can immediately become \textit{grey}. The remaining robots move via the smallest port available at $u$ to the next node according to the DFS traversal protocol (see, Section~\ref{DFS}) and the algorithm continues. 

The unsettled robots scan their neighbour for \textit{black} robots only in the \textit{forward} phase. The algorithm stops when the last unsettled robot that started from $root$ settles and colours itself. The nodes of $G$, which we notate by $D$, that have a \textit{black} robot placed, now form a dominating set for the graph $G$.

\begin{algorithm}[ht!]
    \caption{\sc Dominating Set - Single Source (Algorithm for robot $r_i$)}
    \begin{algorithmic}[1]
    \Require{An $n$-node anonymous graph with $n$ mobile agents docked at $root$.}
    \Ensure{Agents settle over the nodes and identify a minimal dominating set.}
    \Statex
    \State Let the robots $\{r_1,r_2,\dots,r_i,\dots,r_n\}$ be docked at a node $root$. Robot $r_i$ maintains the following list of variables: $\langle r_i.id, r_i.parent, r_i.child, r_i.state, r_i.colour\rangle$, where $r_i.colour$  is initially set to \textit{white}, $r_i.parent$, $r_i.child$ are set to $null$ and $r_i.state$ is set to \textit{forward}.   
    \While{ $r_i$ is unsettled}
        \If{ $r_i$ is not the minimum ID robot among the unsettled robots}
            \State move according to DFS Traversal Protocol~\ref{DFS}
        \Else
            \If{the current node is empty}
                \State set $r_i.settle\leftarrow 1$ and inform the other unsettled robots to stay stationary
                \State move to and check the colour of its parent robot and return
                    \If{ parent is black}
                        \State sets $r_i.colour\leftarrow$\textit{grey}
                    \Else 
                        \State visits each neighbour of the current node and if there is at least one \textit{black} in its neighbour, set $r_i.colour \leftarrow$\textit{grey} or else set $r_i.colour\leftarrow$\textit{black}
                        \State inform the remaining unsettled node about the completion of colouring 
                    \EndIf
            \Else \State move according to DFS Traversal Protocol~\ref{DFS} to find an empty node
            \EndIf
        \EndIf
    \EndWhile
    \State remain stationary at the current node for the rest of the algorithm
    \State The nodes where each \textit{black}-coloured robot is stationed form the dominating set $D$ of $G$.
    \end{algorithmic}
\end{algorithm}\label{alg: rooted_ds}

\begin{figure}[!ht]
  \centering
    \includegraphics[scale=0.3]{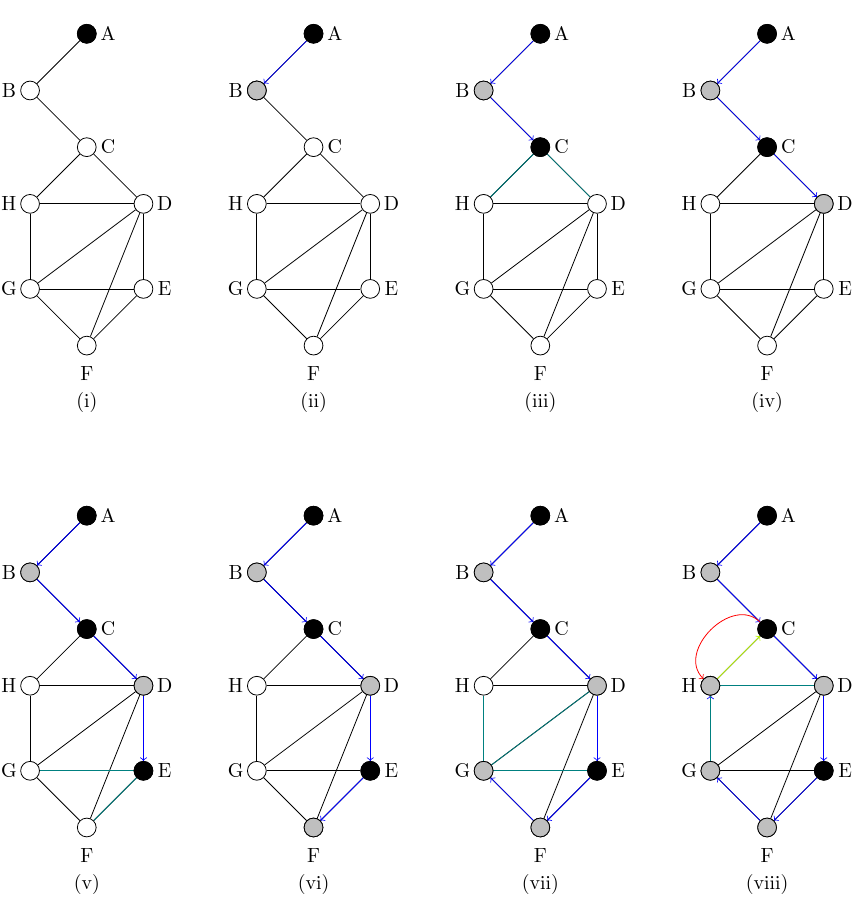}
  \caption{Illustration of Mobile Robots identifying a Minimal Dominating Set. The robots start at the $root$ node A. The minimum ID robot settles at A and colours itself as \textit{black}. The rest of the robots follow the DFS and ends up at node $B$, where the next minimum ID robot in the group settles and colours itself as \textit{grey} (since it comes from a \textit{black} parent). The rest of the robots now stops at the next node in the DFS, node C. The settled robot at C scans the neighbourhood (teal coloured edges) and finds no \textit{black} robot and hence colours itself as \textit{black}. The algorithm continues along the DFS route (blue-directed edges) shown from (iv) to (vii), settles and colours robots accordingly till it reaches the final configuration (viii). The red edge denotes a $backtrack$. After the configuration of (viii) is reached, no new node is discovered and the algorithm stops after completing the DFS of the graph.}\label{fig:rooted_mds}
  \end{figure}

\subsubsection{Analysis}

\begin{lemma}\label{root_minimal}
Algorithm~\ref{alg: rooted_ds} ensures that each robot is associated with a distinct node in $G$ and that the subset, $D$, of robots coloured black forms a minimal dominating set of $G$. 

\end{lemma}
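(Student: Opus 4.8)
The plan is to establish three separate facts: (i) every robot ends up at a distinct node, (ii) the set $D$ of black-colored nodes is a dominating set, and (iii) $D$ is minimal. The first fact is essentially inherited from the DFS Traversal Protocol: Algorithm~\ref{alg: rooted_ds} only augments the standard DFS (Section~\ref{DFS}) with color-assignment steps that involve a settled robot making a bounded side-trip to its parent and (possibly) to each neighbor and returning; these excursions do not alter which ports the unsettled convoy eventually takes, so Lemma~\ref{DFS_Lemma} applies verbatim and all $n$ robots settle at the $n$ distinct nodes of $G$.

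For (ii), I would argue that every node is either black or adjacent to a black node. Order the nodes $v_1, v_2, \dots, v_n$ by the time their resident robot settles (so $v_1 = root$ is black). When the robot at $v_k$ settles, it colors itself black unless it already sees a black neighbor among the previously settled nodes (either via the parent-check shortcut or the full neighbor scan). So by construction, if $v_k$ is colored grey then it has a black neighbor; and if $v_k$ is colored black it trivially dominates itself. The one subtlety is that the neighbor scan at settling time only detects nodes settled \emph{earlier}; a node colored black at time $k$ might later acquire black neighbors, but that is harmless for domination. I also need to note that grey is never revoked and black is never revoked, so the final coloring is the one fixed at settling time — this needs a one-line remark that once a robot settles, its color is never changed by the algorithm.

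For (iii), minimality means no proper subset of $D$ is dominating, equivalently: for every black node $v$, removing $v$ from $D$ leaves some node undominated. It suffices to show each black node $v$ has a ``private'' vertex — either $v$ itself is dominated only by $v$ (i.e. $v$ has no black neighbor), or some neighbor of $v$ is dominated only by $v$. Here is where I would use the key structural claim: \textbf{no two black nodes are adjacent}, i.e. $D$ is an independent set. Indeed, when the robot at $v_k$ settles and checks, it becomes black only if none of its already-settled neighbors is black; and any neighbor that settles \emph{after} $v_k$ will see $v_k$ (now black) and therefore color itself grey, not black. Hence adjacent nodes are never both black. Given independence, every black node $v$ has no black neighbor, so $v$ is dominated solely by itself in $D$; removing $v$ from $D$ leaves $v$ undominated, proving minimality. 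As a bonus this shows $D$ is a maximal independent set: it is independent by the above, and maximal because every non-black (grey) node has a black neighbor by (ii), so no grey node can be added while preserving independence.

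The main obstacle is the timing/consistency argument underlying the independence claim: I must be careful that the "parent is black $\Rightarrow$ immediately grey" shortcut is consistent with the full neighbor scan, and that a robot's color, once set at settling time, is genuinely final (the algorithm never recolors). I would handle this by explicitly stating the invariant: \emph{at the moment robot $r$ settles at node $v$, $r$'s color is determined by whether $v$ has a black neighbor among the already-settled nodes, and this color is permanent}; then the forward/backtrack structure of the DFS guarantees that whenever a later robot settles adjacent to an earlier black node, it sees that black node during its scan (Procedure$\_$MYN or the simple ascending-port walk guarantees the visit). Everything else is bookkeeping over the DFS order.
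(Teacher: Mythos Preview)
Your proof is correct and follows essentially the same approach as the paper: both establish dispersion via the DFS protocol, domination via the ``grey only if some neighbour is already black'' invariant, and minimality via the observation that every black node has only grey neighbours (you make the independence of $D$ explicit and argue it carefully via the settling order, whereas the paper compresses this into the single line ``the parent and every neighbourhood of $r_x$ is \textit{grey}''). One small slip: \textsc{Procedure\_MYN} plays no role in Algorithm~\ref{alg: rooted_ds}---the settling robot simply walks its incident ports while the unsettled convoy waits---so drop that reference from your final paragraph.
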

\begin{proof}
During the course of the algorithm, which begins with a group of unsettled robots at a specially designated node called $root$ 
, the robots get settled one by one in increasing order of the IDs, at some distinct node of $G$. The DFS protocol ensures that the entire graph is explored and each node in $G$ contains a settled robot. Since each settled robot gets immediately coloured as black or grey, Algorithm~\ref{alg: rooted_ds} leaves no robot with the colour \textit{white}. The \textit{black}-coloured robots, $D$, form the dominating set, whereas the \textit{grey} robots are the ones that are adjacent to one or more robots included in the dominating set. Note that a robot is coloured grey if and only if there exists a neighbour that was coloured black. Since there are no white robots at the end of the protocol, $D$ is a dominating set of $G$.

We now prove that $D$ represents a minimal dominating set for the graph $G$. For contradiction, let us assume that after the end of the algorithm, we can remove a \textit{black} node (robot) $r_x$ from $D$ without disrupting it's dominating set property. Therefore, $D$ is still a dominating set for $G$ without $r_x$. Now, since the algorithm executes sequentially, the parent and every neighbourhood of $r_x$ is \textit{grey} after the end of the algorithm. It leaves the node $r_x$ uncovered by any black node. Therefore $D$ does not form a dominating set for $G$ with $r_x$ excluded.
\end{proof}


\begin{lemma}\label{time_rooted}
    Algorithm~\ref{alg: rooted_ds} executes in $O(m)$ rounds. 
\end{lemma}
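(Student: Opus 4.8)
The plan is to account for every round of Algorithm~\ref{alg: rooted_ds} by comparing it against the plain DFS Traversal Protocol, whose round complexity is already established by Lemma~\ref{DFS_Lemma} to be $4m+2n-2 = O(m)$. The algorithm's movement skeleton is exactly that DFS, so the only thing that needs bounding is the additional work done each time a robot settles at a fresh empty node: namely, the ``colouring step'' in which the settling robot (a) walks to its parent and back to check the parent's colour, and, if that fails, (b) visits all neighbours of the current node and returns, scanning for a black robot.

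First I would observe that the colouring step is invoked at most once per node of $G$ — precisely when a robot settles there — so it happens exactly $n$ times over the whole execution. For a node $u$ of degree $\delta_u$, step (a) costs $2$ rounds and step (b) costs at most $2\delta_u$ rounds (out through each port and back). Hence the total extra cost is at most $\sum_{u \in V} (2 + 2\delta_u) = 2n + 2\sum_{u\in V}\delta_u = 2n + 4m$ rounds, using the handshake identity $\sum_{u} \delta_u = 2m$. Meanwhile, during the colouring step the remaining unsettled robots are instructed to stay stationary, so these rounds do not interfere with or inflate the DFS movement count; the two contributions simply add.

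Putting the pieces together, the total number of rounds is at most $(4m + 2n - 2) + (4m + 2n) = 8m + 4n - 2$. Since $G$ is connected we have $m \geq n-1$, so $n = O(m)$ and the bound is $O(m)$, as claimed. I would also note in passing that the per-node neighbourhood scan could alternatively be folded into the DFS edge traversals (the unsettled group already moves along the smallest available port), but the crude $\sum_u \delta_u = 2m$ accounting above is already enough and keeps the argument self-contained.

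The only mildly delicate point — and the one I would be most careful to state explicitly — is that the colouring detour at a node $u$ really is bounded by $O(\delta_u)$ and not something larger: the settling robot visits each neighbour along a port and immediately returns along the reverse port before trying the next, so it never wanders beyond distance one, and it stops early as soon as a black robot is found. Everything else is bookkeeping, and I would keep the proof to essentially the three short paragraphs above.
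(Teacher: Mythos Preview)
Your proof is correct and follows essentially the same approach as the paper: bound the DFS traversal by $O(m)$ via Lemma~\ref{DFS_Lemma}, bound the per-node colouring cost by $O(\delta_u)$, and sum the latter using $\sum_u \delta_u = 2m$ to get $O(m)$ overall. Your version is more explicit about constants and about why the stationary unsettled robots do not inflate the count, but the decomposition and the key $\sum_u \delta_u$ step are identical to the paper's argument.
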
 
\begin{proof}
    The algorithm essentially executes a depth-first search, in which it settles the $n$ robots, one by one. The depth-first search takes up $O(m)$ rounds. In addition to that, once a robot $r_i$ settles at a specific node $u$, it takes a maximum of $O(\delta_i)$ rounds, where $\delta_i$ is the degree of the node $u$, to search for robots among its neighbours, to decide its own colour. Therefore, the algorithm takes $O(m+\sum(\delta_i))=O(m)$ rounds to execute.  
\end{proof}
\noindent \textbf{Memory per robot.}\label{memory_rooted}
Each robot stores its ID which takes $O(\log (n))$ bit space. Along with that, the parent and child pointers take $O(\log (\Delta))$ bit memory each. Other variables take up a constant number of bits. Therefore, the memory complexity is $O(\log(n+ \Delta))=O(\log (n))$ bits. 

Combining it with Lemma~\ref{root_minimal} and Lemma~\ref{time_rooted} and we have the following theorem. 
\begin{theorem}
     Let $G$ be an $n$-node arbitrary, connected and anonymous graph with $m$ edges. Let $n$ mobile robots with distinct IDs in the range $[0,n^c]$, where $c$ is constant, be placed at a single node, known as the rooted initial configuration. Then, a minimal dominating set for $G$ can be found in $O(m)$ rounds using Algorithm~\ref{alg: rooted_ds} with $O(\log (n))$ bits of memory at each robot.
\end{theorem}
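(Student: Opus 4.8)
The plan is to combine the three results already established for Algorithm~\ref{alg: rooted_ds} and assemble them into the stated theorem. The theorem has three separate claims bundled together: (i) correctness --- the black robots form a minimal dominating set; (ii) running time --- $O(m)$ rounds; and (iii) memory --- $O(\log n)$ bits per robot. Each of these is a named lemma or paragraph above, so the proof is essentially a citation-and-assembly argument.

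First I would invoke Lemma~\ref{root_minimal}: when $n$ robots with distinct IDs start from the $root$ and execute Algorithm~\ref{alg: rooted_ds}, each robot settles at a distinct node of $G$ (this part follows from the DFS Traversal Protocol, Lemma~\ref{DFS_Lemma}), and the set $D$ of black-coloured robots is a minimal dominating set of $G$. This handles correctness. Next I would invoke Lemma~\ref{time_rooted}, which states the algorithm runs in $O(m)$ rounds --- the DFS itself costs $O(m)$, and the extra per-node neighbourhood scan to decide a colour adds only $\sum_i O(\delta_i) = O(m)$ more. Finally, for the memory bound I would reuse the reasoning in the ``Memory per robot'' paragraph: the ID field needs $O(\log n)$ bits since IDs lie in $[0,n^c]$ with $c$ constant, the $parent$ and $child$ port pointers need $O(\log \Delta)$ bits each, the $colour$, $state$, and $settled$ flags need $O(1)$ bits, and $\Delta \le n$, so the total is $O(\log n)$.

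There is really no obstacle here: the theorem is a packaging statement, and all the genuine work --- the invariant that a robot is coloured grey iff it has a black neighbour, the minimality argument by contradiction, the edge-traversal count of the DFS --- has already been discharged in the preceding lemmas. The one point worth stating explicitly in the write-up is why the hypothesis ``IDs in $[0,n^c]$, $c$ constant'' is exactly what is needed to turn the $O(\log \lambda)$-type bookkeeping into $O(\log n)$: since $\lambda \le n^c$, we have $\log \lambda = O(\log n)$, and no \textsc{Procedure$\_$MYN} calls are used in the rooted algorithm, so $\lambda$ does not even enter the round complexity. I would close by noting that combining Lemmas~\ref{root_minimal} and~\ref{time_rooted} with the memory discussion yields all three conclusions simultaneously, completing the proof.
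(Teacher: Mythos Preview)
Your proposal is correct and matches the paper's own approach exactly: the paper simply states that combining Lemma~\ref{root_minimal}, Lemma~\ref{time_rooted}, and the memory-per-robot paragraph yields the theorem, which is precisely the citation-and-assembly argument you give. Your additional remark about why the hypothesis on ID range keeps the memory at $O(\log n)$ is a helpful clarification but not required.
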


As a by-product, the Algorithm~\ref{alg: rooted_ds} computes a maximal independent set (MIS). In fact, the set $D$ is also a MIS for $G$. Thus we get the following result on MIS.  

\begin{theorem}[Maximal Independent Set]\label{lem:mis-single-source}
Let $n$ mobile robots be initially placed at a single node of an $n$-node anonymous graph $G$ with $m$ edges. Then there is an algorithm (cf. Algorithm~\ref{alg: rooted_ds}) for the robots to compute a maximal independent set of $G$ in $O(m)$ rounds and with $O(\log (n))$ bits memory per robot.
\end{theorem}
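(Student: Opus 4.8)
The plan is to observe that the theorem is essentially a corollary of the two lemmas already proved for Algorithm~\ref{alg: rooted_ds}: it suffices to show that the set $D$ of black robots is not merely a minimal dominating set (Lemma~\ref{root_minimal}) but in fact a maximal independent set, and then to inherit the $O(m)$ round bound from Lemma~\ref{time_rooted} and the $O(\log n)$ memory bound from the memory analysis verbatim.

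First I would establish that $D$ is an independent set. Fix any edge $\{u,v\}$ of $G$ and suppose for contradiction that both endpoints carry black robots. Since robots settle one at a time in strictly increasing order of ID along the DFS, one of the two nodes — say $v$ — receives its settled robot strictly after $u$ does. By the colouring rule, the robot that settles at $v$ either has a black parent, in which case it immediately colours itself grey, or else it visits every neighbour of $v$ before fixing its colour; in the latter case it encounters the already-settled black robot at $u$ and therefore colours itself grey. In either case $v$ is not black, a contradiction. Hence no two adjacent nodes of $D$ both hold black robots, i.e. $D$ is independent.

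Next I would upgrade independence to maximality using the fact, from Lemma~\ref{root_minimal}, that $D$ is a dominating set of $G$. If $D$ were not a maximal independent set, some vertex $x \notin D$ could be added to $D$ while preserving independence, which means $x$ has no neighbour in $D$; but then $x$ is neither in $D$ nor dominated by any vertex of $D$, contradicting that $D$ dominates $G$. Therefore $D$ is a maximal independent set. Since the algorithm in question is exactly Algorithm~\ref{alg: rooted_ds}, Lemma~\ref{time_rooted} gives the $O(m)$ round complexity and the memory analysis preceding this theorem gives the $O(\log n)$-bit bound, completing the proof.

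I expect the only genuinely delicate point to be the independence argument, specifically the bookkeeping around which neighbours of a node are already settled when its colour is decided: a neighbour of $u$ may well still be unsettled at the moment $u$ is coloured, which is precisely why the argument must be phrased in terms of the \emph{later} of the two settlement times, so that the relevant neighbour is guaranteed to already be black when it is inspected. Everything else reduces to the two preceding lemmas and the elementary observation that an independent dominating set is automatically a maximal independent set.
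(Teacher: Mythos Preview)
Your proposal is correct and follows essentially the same approach as the paper: both prove independence by arguing that a robot only becomes black after confirming no neighbour is already black, and both derive maximality from the fact that every non-black robot has a black neighbour. Your maximality argument is phrased slightly more abstractly (independent $+$ dominating $\Rightarrow$ maximal independent, invoking Lemma~\ref{root_minimal}) whereas the paper re-derives the ``grey has a black neighbour'' fact directly, but the content is the same.
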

\begin{proof}
    A robot receives a \textit{black} colour after ensuring that no neighbour is coloured black. This ensures that no two adjacent nodes host \textit{black} robots guaranteeing $D$ to be an independent set. Now, let's assume that we can add a black node $v$ to $D$ while maintaining its independent set property. Since, the robot at $v$ was initially \textit{grey}, it must have had a \textit{black} neighbour at some node $u$. Therefore, we happen to have two adjacent nodes $u$ and $v$ that are resided by \textit{black} robots, contradicting the fact that $D$ is an independent set. Therefore, no new black node can be added to the independent set $D$, making it a maximal independent set (MIS). 
\end{proof} 

\subsection{Multi Source Initial Configuration}\label{mds_arbitrary}
In the multi-source (or arbitrary) case, the $n$ robots $\{r_1,r_2,\dots,r_n\}$ start as $\ell<n$ clusters, placed arbitrarily at $\ell$ different nodes of $G$. Once again, our algorithm for constructing dominating sets is inspired by the DFS procedure for dispersion~\cite{KS21}.
In this section, we show how we can achieve a minimal dominating set for such a configuration.  

The robots first perform dispersion using the method described in Kshemkalyani\cite{KS21}. Briefly, their algorithm allows each of the $\ell$ clusters to begin DFS independently. If a DFS meets no other DFS, then it executes to completion as in the rooted case. If not, their algorithm ensures that if two (or more) DFSs meet while dispersing, they are all merged into the DFS with the largest number of settled robots. That is if DFS $i$ meets DFS $j$ and $i$ has more settled robots at the time of meeting, then $j$ is collapsed and all of $j$'s robots join in executing $i$'s DFS. This act of gathering all of $j$ into $i$ is called \emph{subsumption}. Thus, when two DFSs meet, the shorter DFS gets subsumed by the larger one. Note that this means that once dispersion is achieved, there may be several DFS trees.

Let us assume then, after the end of the algorithm, there are $\ell^* \leq \ell$ independent DFSs that never met (each with its own unique $root$). Each DFS is identified by a unique label $treelabel$ (which is the ID of the smallest robot in the tree). Our procedure to create the minimal dominating set consists of two steps i) elect a global leader robot $r^*$ ii) with $r^*$ as root, use an adaption of Algorithm~\ref{alg: rooted_ds} to achieve a minimal dominating set. Note that the dispersion algorithm in~\cite{KS21} takes $O(m)$ rounds. To ensure coordination, each robot waits to begin the leader election protocol after $cn\Delta$ rounds for a sufficiently large constant  $c$ from the start of dispersion.\\



\noindent\textbf{Leader Election:} We first use the \textsc{Procedure$\_$MYN} (see Section~\ref{prelims}) to ensure that all robots first meet their neighbours. When a robot meets a robot from a different ID, they share their $treelabel$s with one another. When a robot receives a lower $treelabel$ value, it updates its own $treelabel$ to the lower value. After \textsc{Procedure$\_$MYN} has been executed, each robot, starting from the leaf nodes in a DFS (say DFS $i$) now sends its newly updated $treelabel$ value to the root of DFS $i$ ($root_i$) via its $parent$ pointers. The root of the DFS $i$, $root_i$, waits for $O(n)$ rounds as it receives different values of $treelabel$ continuously from its leaf nodes (\textit{up-casting}). The $root_i$ now compares the minimum $treelabel$ values received from its children with its own $treelabel$ and decides on the minimum $treelabel$. It then sends it along through its children into each and every node of its, in the DFS (\textit{down-casting}). All the robots in DFS $i$ are now updated with the lowest $treelabel$ value. This marks the end of a phase. After the end of $\ell$ (although $\ell^*$ is sufficient, its value is unknown) such phases, it is guaranteed that each of the $n$ robots in $G$ now has a consistent $treelabel$ value. The final $treelabel$ value is the minimum among the $\ell^*$ DFSs, however, it may not represent the minimum $ID$ robot, as it may be consumed by a larger DFS during the dispersion process happening earlier. After the end of the protocol, the robot that has the globally decided minimum $treelabel$ continues the algorithm. We identify the leader robot with the minimum $treelabel$ value as $r^\star$.\\

\noindent\textbf{Creating Minimal Dominating Set:} With $r^\star$ as a leader, it first creates a new DFS of the graph $G$ with the node containing $r^\star$ as the $root$ node. Since the robots are dispersed across each and every node of $G$, the nodes of $G$ are now distinguishable owing to the distinct IDs of the mobile robots in the nodes. $r^\star$  starts from the root and rewrites the $parent$ and $child$ pointers of each robot as its traverses across the whole graph $G$ in a depth-first manner. Moving to an empty node implies that $r^\star$ has arrived at the $root$. In such cases, $r^\star$ modifies its own variables accordingly. The other details regarding the DFS can be found in Section~\ref{DFS}. After the end of this process, which takes an additional $O(m)$ rounds, the newly assigned pointers of the robots now represent a DFS of $G$ with $r^\star$ (node) as the root.

Initially, all the robots are coloured \textit{white} by default. The identification of the dominating set takes place in similar lines as with the rooted case described in the previous Section~\ref{alg: rooted_ds}. The algorithm begins with $r^\star$ colouring itself black. $r^\star$ then moves the next node (say, node $u$ resided by a robot $r_u$) following the newly created DFS. Upon the arrival of $r^\star$ at $u$, $r_u$ scans each of its neighbours to check if there is a black robot. If there are no black robots, $r_u$ colours itself as \textit{black} otherwise if there is at least one black robot in the neighbourhood of $u$, $r_u$ colours itself \textit{grey}. After $r_u$ has changed its colour from \textit{white}, it informs $r^\star$ as $r^\star$ now moves to the next node following the DFS. The process continues similarly with each new node of degree $O(\delta_i)$, taking $O(\delta_i)$ rounds its scan its neighbour and colour itself accordingly. The algorithm terminates as soon as $r^\star$ completes the DFS of $G$.

Therefore, combining the results from previous section (Section~\ref{sec:rooted-config}, DFS traversal and \textsc{Procedure$\_$MYN}, we  get the following main result. 
\begin{theorem}
   
        Let $G$ be an $n$-node anonymous graph with the maximum degree $\Delta$. Let $n$ mobile robots with distinct IDs in the range $[0,n^c]$, where $c$ is constant, be initially placed arbitrarily among $\ell$ nodes of $G$ in $\ell$ clusters. Then, a minimal dominating set for $G$ can be found in $O(\ell\Delta\log(\lambda)+n\ell+m)$ rounds using Algorithm~\ref{mds_arbitrary} with $O(\log (n))$ bits memory per robot. $\lambda$ is the maximum ID-string length of the robots. 
\end{theorem}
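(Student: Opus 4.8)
The plan is to establish the theorem by accounting for the correctness and the round complexity of the three-phase procedure described in Section~\ref{mds_arbitrary}: (i) dispersion via the algorithm of~\cite{KS21}, (ii) leader election, and (iii) the DFS-based colouring adapted from Algorithm~\ref{alg: rooted_ds}. For correctness, I would argue exactly as in Lemma~\ref{root_minimal}: once a global leader $r^\star$ has rebuilt a single DFS tree over all $n$ nodes (each node now being distinguishable because it hosts a unique robot ID), the colouring step is identical to the rooted case, so the set $D$ of black robots is a dominating set (no white robots remain) and is minimal (the parent and all neighbours of any black robot are grey, hence removing it leaves that node uncovered). The only genuinely new correctness obligation is to show that the leader election phase terminates with \emph{all} $n$ robots agreeing on a single $treelabel$, so that exactly one robot proceeds as $r^\star$; I address this below.

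For the round complexity, I would add the costs phase by phase. Dispersion by~\cite{KS21} costs $O(m)$ rounds, and all robots wait a common $cn\Delta$ rounds (which is $O(n\Delta) \subseteq O(n\ell + m)$ only loosely — actually I should be careful here; $cn\Delta$ is not obviously dominated, so I would note that this waiting term is subsumed by the leader-election cost $O(\ell\Delta\log\lambda)$ plus $O(n\ell)$ when $\ell \geq 1$, or simply fold the synchronisation barrier into the stated bound by observing $n\Delta \le \ell\Delta\log\lambda + n\ell + m$ is false in general — so instead I would state the barrier as $O(n\Delta)$ and remark the theorem's bound should be read with that understood, or re-examine whether $cn\Delta = O(m + n\ell)$; since $m$ can be as small as $n-1$, I will treat the $cn\Delta$ barrier as an additive $O(n\Delta)$ term and note it is dominated once $\ell \ge 1$ only if $\Delta \log\lambda \ge \Delta$, i.e.\ always, giving $n\Delta \le n\ell\Delta\log\lambda$ — wait, that needs $\ell\log\lambda \ge 1$, true). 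The leader election phase runs \textsc{Procedure$\_$MYN} once at cost $O(\Delta\log\lambda)$ by Lemma~\ref{lem: pairing}, then performs $\ell$ phases of up-cast/down-cast along DFS-tree parent pointers, each phase costing $O(n)$ rounds (tree depth at most $n$), for a total of $O(\Delta\log\lambda + n\ell)$; I would absorb the single MYN call into the $\ell \ge 1$ term to get $O(\ell\Delta\log\lambda + n\ell)$. Finally, $r^\star$ rebuilds the DFS in $O(m)$ rounds and the colouring adds $O(\sum_i \delta_i) = O(m)$ rounds as in Lemma~\ref{time_rooted}. Summing gives $O(\ell\Delta\log\lambda + n\ell + m)$. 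The memory bound is identical to the rooted analysis: IDs and $treelabel$ take $O(\log n)$ bits, parent/child pointers $O(\log\Delta)$, colour and state $O(1)$, for $O(\log n)$ total.

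The main obstacle is the correctness of the leader election phase — specifically, proving that after $\ell$ up-cast/down-cast phases every robot holds the globally minimum $treelabel$ value. I would argue this by tracking how information propagates between distinct DFS trees: after dispersion there are $\ell^\star \le \ell$ surviving trees; \textsc{Procedure$\_$MYN} guarantees (Lemma~\ref{lem: pairing}) that adjacent robots in different trees exchange $treelabel$s, so since $G$ is connected the ``tree adjacency graph'' on the $\ell^\star$ trees is connected. In one phase, each tree's root collects the minimum $treelabel$ currently present in its tree (via up-cast) and broadcasts it back (via down-cast); because a robot adopts the smaller $treelabel$ whenever it learns one, after one phase a tree's label is the minimum over itself and all trees adjacent to it at the moment MYN ran. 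A standard induction on the diameter of the tree-adjacency graph (at most $\ell^\star - 1 \le \ell - 1$) then shows $\ell$ phases suffice for the global minimum to reach every tree, hence every robot. I should double-check the subtle point that MYN is run only once, at the start, so the ``tree adjacency'' is fixed — the label values migrate but the underlying inter-tree edges were recorded implicitly when neighbours met, and the up/down-casting thereafter operates purely within each (fixed) DFS tree. One more detail to verify: the final common $treelabel$ identifies a unique robot (the one whose ID equals that value) only if that robot is still present and unsubsumed as a \emph{node occupant}; since after dispersion every node hosts exactly one robot and IDs are distinct, the robot with ID equal to the winning $treelabel$ is well-defined and becomes $r^\star$. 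With that, the theorem follows by combining the correctness of the colouring (Lemma~\ref{root_minimal}, reused verbatim) with the round and memory accounting above.
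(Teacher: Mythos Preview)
Your overall decomposition into dispersion, leader election, and rooted-style colouring matches the paper, and your reuse of Lemma~\ref{root_minimal} for correctness of the colouring phase, together with the memory accounting, is exactly what the paper does.

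The genuine gap is in your leader-election analysis. You assert that \textsc{Procedure$\_$MYN} is executed \emph{once}, at the very start, and that the $\ell$ subsequent phases consist only of up-cast/down-cast along each fixed DFS tree. But then your own induction on the diameter of the tree-adjacency graph fails: after phase~1, each tree holds the minimum label over itself and its immediate tree-neighbours; for that minimum to propagate one further hop in phase~2, a boundary robot must \emph{again} learn the now-updated label of its neighbour in the adjacent tree. Up-casting and down-casting move information only along parent pointers \emph{within} a tree and never cross tree boundaries, so another run of \textsc{Procedure$\_$MYN} is required at the start of every phase. Your attempted rescue (``the underlying inter-tree edges were recorded implicitly when neighbours met'') does not work in this model: with $O(\log n)$ memory a robot cannot store up to $\Delta$ neighbour IDs, and in any case must physically co-locate to communicate. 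The paper's algorithm and proof run \textsc{Procedure$\_$MYN} once \emph{per phase}, for $\ell$ executions in total, and this is precisely the origin of the $\ell\Delta\log\lambda$ term --- not an artificial inflation of a single $\Delta\log\lambda$ call by a factor of~$\ell$. (Your side worry about the $cn\Delta$ synchronisation barrier is legitimate and is in fact silently omitted from the paper's own proof; but it is secondary to the MYN-per-phase issue.)
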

\begin{proof}
    Algorithm~\ref{mds_arbitrary} first disperses the $n$ robots over the $n$ nodes using the algorithm in~\cite{KS21} , which takes $O(m)$ rounds. We then elect a leader among the robots. There are at most $\ell$ DFS trees (as it starts with $\ell$ clusters initially). Each DFS tree uses \textsc{Procedure$\_$MYN} to communicate with the neighbouring DFS (if any). Once the leaf nodes receive a new communication from a different DFS, it sends the information to the root of its DFS; which may take $O(n)$ rounds. In the worst case
    \textsc{Procedure$\_$MYN} may be executed $\ell$ times to elect the leader. From Lemma~\ref{lem: pairing} we know that  \textsc{Procedure$\_$MYN} takes $O(\Delta \log(\lambda))$ rounds. Thus, the leader election requires $O(\ell(\Delta\log(\lambda)+n))$ rounds. The leader robot then finds a minimal dominating set while forming a single DFS tree using at most $O(m)$ rounds. Finally, the dominating set identification takes another $O(m)$ rounds. Therefore, the time complexity of Algorithm~\ref{mds_arbitrary} is $O(m + \ell(\Delta\log(\lambda)+n)+ m) =O(m+\ell\Delta\log(\lambda)+n\ell)$ rounds. 
\end{proof}

\section{Algorithm for $\ln(\Delta)$-Approximation Minimum Dominating Set}\label{sec:approx-algo-minD}
In this section, we describe an approximation algorithm that gives us a dominating set with a size that is optimal within a factor of $\ln(\Delta)$ in the mobile agent setting. In the previous algorithms, our methodology was based on scanning only the 1-hop neighbourhood of a robot to assign a robot one of the two colours - \textit{black} or \textit{grey}. The set of \textit{black} coloured robots formed the dominating set for $G$. Although these previous algorithms (\textit{rooted}) ran in $O(m)$ rounds, the approximation ratio of the dominating set size produced with these algorithms to the optimal (minimum) dominating set could be huge. For example, if we consider a star graph with the robots starting at one of the leaves, then all leaf nodes become included in the dominating set (Figure \ref{fig:non_optimal_ds}). Although that dominating set is minimal, it is far from optimal. The optimal in the case of a star graph is the single non-leaf node at the centre. Therefore, the approximation ratio in such a case can be as large as $O(n)$. In this section, we adapt the well-known greedy distributed algorithm~\cite{LH00}, to provide a better approximation solution for the dominating set problem in the mobile agent setting. 

\begin{figure}[!ht]
  \centering
    \includegraphics[scale=0.25]{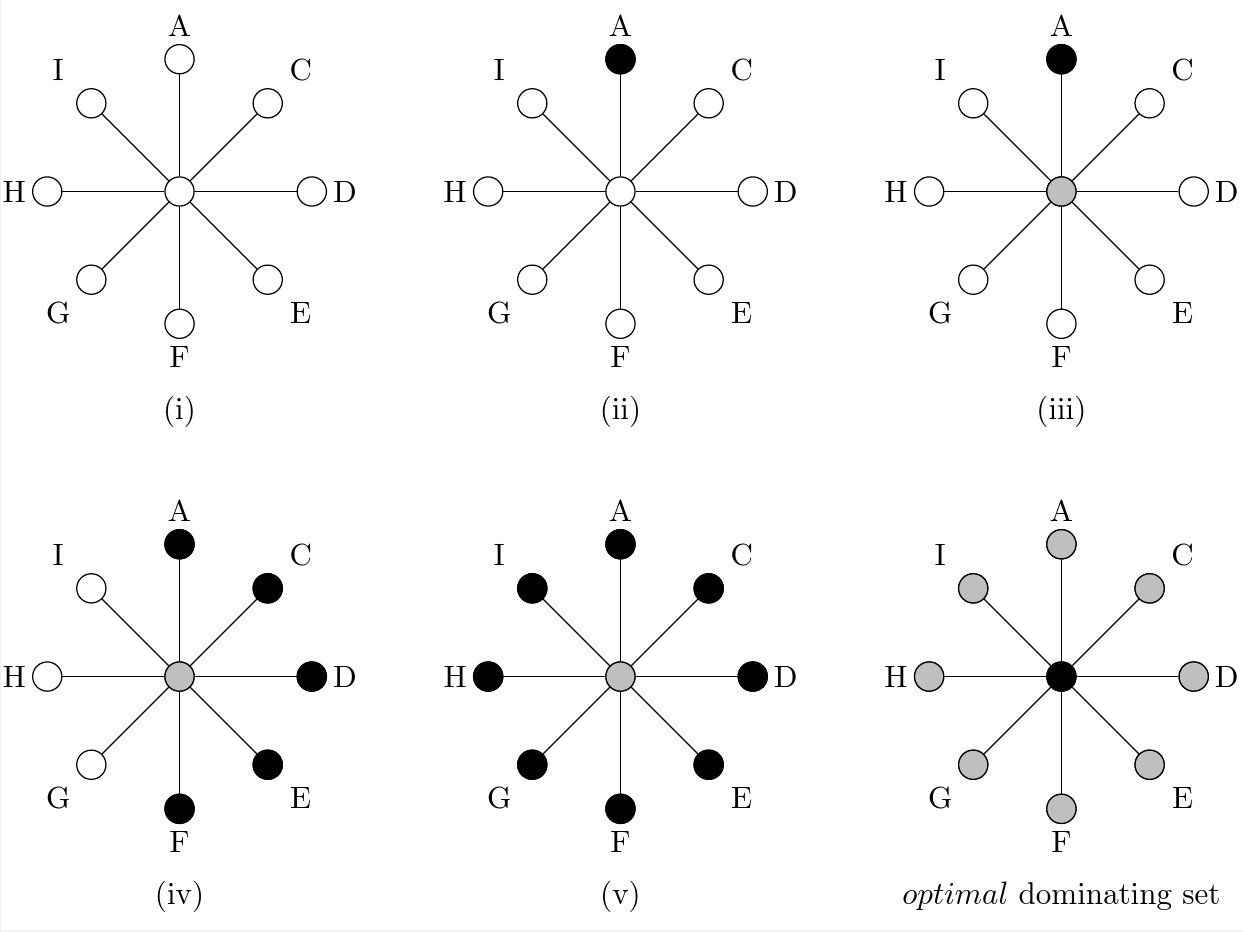}
  \caption{When the robots start at a leaf node (node A), our previous algorithm (Algorithm~\ref{alg: rooted_ds}) produces a minimal dominating set identified by the black nodes as in (v) by executing through (ii), (iii) and (iv). The figure right next to (v) on the other hand, shows an optimal dominating set. Therefore, for an $n$ node graph, the size of a minimal dominating set could be as bad as $O(n)$ compared to the optimal.} \label{fig:non_optimal_ds}
  \end{figure}

In~\cite{LH00}, the authors give a greedy distributed algorithm (referred to in the paper as ``\textit{distributed database coverage heuristic} (DDCH)") for calculating a dominating set with an approximation ratio of $\ln\Delta$ to the optimal set size. The same is reintroduced for a synchronous communication model where the nodes can communicate with each other without moving~\cite{JRS02}. The authors also refined the algorithm to a randomized version obtaining a better-expected running time. We essentially adopt the greedy distributed algorithm to our own model. The challenge is of course as follows: in the earlier distributed setting, nodes are allowed to communicate with their neighbours within the span of a round, whereas the robots in our model cannot communicate with each other unless they are at the same node. While the robots can gather information by visiting neighbouring robots, however, it is difficult to time the movement of robots and guarantee that two neighbours meet without access to a global clock. However, with the help of \textsc{Procedure$\_$MYN} introduced in the previous section, a successful adaptation of the greedy algorithm is possible. More details follow.

For this protocol, we assume that the $n$ mobile robots of set, $\mathcal{R}=\{r_1,r_2,\dots,r_n\}$ start from a dispersed position on graph $G$. That is, they are initially dispersed among the $n$ nodes of the graph $G$ with each robot at a distinct node of $G$. If the robots are arbitrarily placed over $G$ initially, the robots could be re-positioned using the dispersion algorithm described in ~\cite{KS21}. As in our previous sections, each robot is assigned a variable colour to keep track of its colour, which can take one of three values \emph{white, grey} or \emph{black}. Let $D\subset G$ be the set of black robots at the end of the protocol, and $D$ forms the dominating set of $G$. A white robot is one that is not (yet) covered by any robot in the dominating set. The \textit{grey} robots represent robots that are already covered by some robot in $D$. Initially, all the robots are coloured \textit{white}.

Let $span$ of a robot $r_i$, $w(r_i)$, be the number of white robots in the direct ($1-hop$) neighbourhood of $r_i$, including $r_i$ itself. Each robot $r_i$ uses an extra variable $r_i.span$ for the $span$ values. $r_i.span$ has both the ID and the span count. The basic idea of the protocol is as follows. Each robot calculates the maximum span value within its $2-hop$ neighbourhood. If $r$ is the robot that has the maximum span within $2-hops$ then $r$ colours itself black and informs its neighbours. To achieve this, each robot performs the following four stages (of $2\Delta \log (\lambda)$ rounds each) until there are no white robots left in the graph:
\begin{enumerate}
    \item \textbf{$r_i$ calculates its span $r_i.span$: }
    The \textsc{Procedure$\_$MYN}, described in the previous section guarantees that $r_i$ meets all its neighbouring robots within  $O(\Delta\log(\lambda))$ rounds (``meeting" includes $r_i$ being stationary and a neighbouring robot arriving at $r_i$ during these $O(\Delta\log(\lambda))$ rounds). Inside the first $O(\Delta\log(\lambda))$ rounds, the robot $r_i$ communicates with its immediate neighbours and evaluates $r_i.span$.  
    
    \item \textbf{$r_i$ gets to know the highest $span$ value within its immediate neighbours}\\
    Inside the next $O(\Delta\log(\lambda))$ rounds, $r_i$ communicates with its immediate neighbours to know the highest $span$ value within its neighbourhood. If $r_j.span$ is the highest span value among immediate neighbours, $r_i.span$ gets replaced by $r_j.span$. For identical $span$ values, a robot with a lower ID is selected. 
    \item \textbf{$r_i$ gets to know the highest $span$ value within its $2-hop$ neighbours}\\
    In the next $O(\Delta\log(\lambda))$ rounds of the algorithm, $r_i$ visits all its immediate neighbours once again to check if there are any new updated span values (possibly from a neighbour's neighbour). 
    \item \textbf{if $r_i$ coloured itself black, then it informs its neighbours.}\\
    After the completion of the first three stages, $r_i$ has the information of the highest span and the robot that has the highest span in its $2-hop$ neighbourhood. If $r_i$ itself is the robot with the highest span, it colours itself \textit{black}. Any robot that receives \textit{black} colour then takes additional $O(\Delta\log(\lambda))$ rounds to go to its neighbours and colour any \textit{white} robot in their neighbour as \textit{grey}
\end{enumerate}
Note that once a robot colours itself  \textit{black}, its colour does not change.  All robots then reset their $span$ values and the next phase of $O(\Delta\log\lambda)$ begins. The algorithm runs till each robot $r_i$ has no more \textit{white} robots in their neighbourhood.  

\begin{algorithm}[ht!]
    \caption{\sc $\ln(\Delta)$ Optimal Dominating Set (Algorithm for robot $r_i$)}
    \begin{algorithmic}[1]
    \Statex
    \While{there are white robots in the neighbourhood} 
       \For{$i=1$ to $O(\Delta \log \lambda)$}  \Comment{Stage $1$}
       \State Each robot $r_i \in \mathcal{R}$ computes its span $r_i.span$ using compute span  \textsc{Procedure$\_$MYN}.
       \If{$r_i.span$ is zero}
         \State Then $r_i$ stops executing the protocol. But will provide information if requested. 
       \EndIf
       \EndFor
             \For{$i=1$ to $O(\Delta \log \lambda)$}  \Comment{Stage $2$}
       \State Each robot $r_i \in \mathcal{R}$ communicates its span $r_i.span$ its neighbours.
       \EndFor 
             \For{$i=1$ to $O(\Delta \log \lambda)$}  \Comment{Stage $3$}
       \State Each robot $r_i \in \mathcal{R}$ computes the maximum span within 2-hops.
       \EndFor
       
        \If{$r_i.span$ is highest among all robots in the  $2-hop$ neighbourhood},  \Comment{Stage $4$}
            \State $r_i$ colours itself as \textit{black}
            \For{$i=1$ to $O(\Delta \log \lambda)$}
            \State $r_i$ meets each neighbour and colours any \textit{white} robot as \textit{grey} (use \textsc{Procedure$\_$MYN}) 
            \EndFor
        \EndIf
        \State Reset $r_i.span$ for all $r_i\in \mathcal{R}$.
    \EndWhile
    \end{algorithmic}
\end{algorithm}\label{alg: approx_DS}

\begin{lemma}\label{lemma: 2_hops}
    Let $r_i$ and $r_k$ be two robots at a distance of $2$ hops from each other. Then, the value of $r_k.span$ can be communicated to $r_i$ within $O(\Delta\log(\lambda))$ rounds.   
\end{lemma}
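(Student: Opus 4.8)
The plan is to route $r_k.span$ through a common neighbour of $r_i$ and $r_k$ and to charge the cost to two successive invocations of \textsc{Procedure$\_$MYN}. Since $r_i$ and $r_k$ are at distance $2$, there is a robot $r_j$ adjacent to both of them (its host node lies on a length-$2$ path between the two nodes). The argument then splits into two steps, each occupying one call to \textsc{Procedure$\_$MYN}.

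First I would invoke Lemma~\ref{lem: pairing}: during one execution of \textsc{Procedure$\_$MYN} every robot meets each of its neighbours at least once, so in particular $r_j$ meets $r_k$ within $O(\Delta\log(\lambda))$ rounds. At that meeting $r_j$ records $r_k.span$ (in Algorithm~\ref{alg: approx_DS} this is exactly Stage~$2$, where a robot collects the span values of its neighbours, so $r_j$ certainly learns the value $r_k.span$).

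Second, in the immediately following execution of \textsc{Procedure$\_$MYN} — which the robots can begin in lock-step, since the system is synchronous and each phase has a fixed length of $O(\Delta\log(\lambda))$ rounds, so every robot counts rounds and switches phase simultaneously — Lemma~\ref{lem: pairing} again guarantees that $r_i$ meets $r_j$ within $O(\Delta\log(\lambda))$ rounds, and at that meeting $r_j$ forwards the value it obtained from $r_k$. Hence after the two phases $r_i$ knows $r_k.span$, and the total cost is $2\cdot O(\Delta\log(\lambda)) = O(\Delta\log(\lambda))$ rounds, as claimed.

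The only delicate point is bookkeeping rather than a genuine obstacle: one must ensure that the intermediate robot $r_j$ actually stores (and does not overwrite) the datum destined for $r_i$ between the two phases, and that the two \textsc{Procedure$\_$MYN} calls are phase-aligned across all robots — the former follows from the $O(\log n)$-bit memory budget (a constant number of $\langle \mathrm{ID}, \mathrm{span}\rangle$ pairs fit), the latter from the common clock together with the fixed $O(\Delta\log(\lambda))$ phase length. A minor edge case is when $r_i$ and $r_k$ are in fact adjacent (distance $\le 2$): then a single execution of \textsc{Procedure$\_$MYN} already delivers $r_k.span$ to $r_i$, so the bound holds a fortiori.
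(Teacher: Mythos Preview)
Your proposal is correct and follows essentially the same approach as the paper: route $r_k.span$ through a common neighbour using two successive $O(\Delta\log(\lambda))$-round invocations of \textsc{Procedure$\_$MYN}, then sum the costs. Your version is slightly more careful (explicitly citing Lemma~\ref{lem: pairing}, noting phase alignment via the common clock, and handling the distance-$1$ edge case), but the underlying argument is the same.
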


\begin{proof}
    Since $r_i$ and $r_k$  are at a distance of $2$ hops from each other, there exists a sequence of robots (nodes) $(r_i,r_{i_{1}},r_k )$ from $r_i$ to $r_k$. In the first $O(\Delta\log(\lambda))$ rounds, $r_i$ can communicate with $r_{i_{1}}$ to get the value of $r_{i_{1}}.span$. In the meanwhile, $r_{i_{1}}$ also collects the value of $r_k.span$ in the same $O(\Delta\log(\lambda))$ round. Therefore, in the second sub-phase of $O(\Delta\log(\lambda))$ rounds, when $r_i$ communicates with $r_{i_{1}}$ again, $r_i$ receives the value of $ r_k.span$ through $r_{i_{1}}$ or the vice-versa.
\end{proof}

\begin{lemma} \label{lem: approximation}
    Algorithm~\ref{alg: approx_DS} computes a dominating set $D$ which is a $\ln(\Delta)$ approximation to the optimal size dominating set $D^*$ for the graph $G$. 
\end{lemma}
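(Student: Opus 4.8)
The plan is to establish two facts: that $D$ is a genuine dominating set of $G$, and that $|D|\le H(\Delta+1)\,|D^*|$, where $H(k)=\sum_{i=1}^{k}1/i\le 1+\ln k$; this $H(\Delta+1)$ bound is the precise form of the ``$\ln(\Delta)$-approximation'' claimed, exactly as in the classical greedy set-cover / DDCH analysis. Throughout I use that, by Lemma~\ref{lem: pairing} and Lemma~\ref{lemma: 2_hops}, Stages~1--4 of one phase do faithfully compute, for every robot $r_i$, its span, the maximum span in its $1$-hop neighbourhood, and the maximum span in its $2$-hop neighbourhood within the allotted $O(\Delta\log\lambda)$ rounds, so that the ``$2$-hop local maximum'' selection rule is implemented correctly.

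\emph{$D$ is a dominating set.} In each phase the robot with the globally largest pair $(\mathrm{span},-\mathrm{ID})$ is in particular a $2$-hop local maximum, so at least one robot turns black; when a robot turns black, it (being one of its own white closed neighbours) together with all its white neighbours turn grey or black. Hence the number of white robots strictly decreases every phase, so the \textbf{while} loop terminates after at most $n$ phases. At termination the loop guard fails at every robot, i.e.\ no robot has a white robot in its closed neighbourhood; since a white robot always lies in its own closed neighbourhood, this forces no robot to be white. Thus every node is in $D$ or adjacent to a node of $D$.

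\emph{Approximation ratio.} Use the standard charging argument, adapted to the parallel selection. For each robot $v$, let $b(v)$ be the black robot that first made $v$ non-white (possibly $b(v)=v$), and charge $v$ the amount $1/w(b(v))$, where $w(b(v))$ is the span of $b(v)$ in the phase it turned black. First, two robots $b,b'$ that turn black in the \emph{same} phase are at distance $>2$: otherwise each lies in the other's $2$-hop neighbourhood, and then the strict ID tie-break (Stages~2--3) would make each strictly preferable to the other there, a contradiction; consequently $N[b]\cap N[b']=\varnothing$ and no robot is charged twice in a phase. Across phases each robot is charged exactly once, at the phase it is covered, so $\sum_{v} \mathrm{charge}(v)=\sum_{b\in D} w(b)\cdot\frac{1}{w(b)}=|D|$. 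Now fix $v^*\in D^*$ and order the robots in $N[v^*]$ as $u_1,\dots,u_k$, $k=\delta_{v^*}+1$, by the phase in which they are covered. When $u_j$ is covered, say in phase $t$, all of $u_j,u_{j+1},\dots,u_k$ are still white at the start of phase $t$, so $\mathrm{span}(v^*)\ge k-j+1$ there; moreover $b(u_j)$ is within $2$ hops of $v^*$ (it equals $u_j$ or a neighbour of $u_j$, and $u_j\in N[v^*]$) and was selected as a $2$-hop local maximum, so $w(b(u_j))=\mathrm{span}(b(u_j))\ge\mathrm{span}(v^*)\ge k-j+1$, the inequalities measured at the start of phase $t$. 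Hence the total charge on $N[v^*]$ is at most $\sum_{j=1}^{k}\frac{1}{k-j+1}=H(k)\le H(\Delta+1)$. Since $D^*$ dominates $G$, $\bigcup_{v^*\in D^*}N[v^*]=V$, so $|D|=\sum_v\mathrm{charge}(v)\le\sum_{v^*\in D^*}H(\Delta+1)=H(\Delta+1)\,|D^*|\le(1+\ln(\Delta+1))\,|D^*|$, which is a $\ln(\Delta)$-approximation up to lower-order terms.

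The main obstacle is the parallel-selection subtlety that is absent from the sequential greedy proof: I must (a) rule out double counting by showing that black robots chosen in one phase have disjoint closed neighbourhoods, which is precisely where the strict ID tie-break is essential, and (b) ensure the span of $b(u_j)$ and the span of $v^*$ appearing in $w(b(u_j))\ge\mathrm{span}(v^*)$ are read at the same instant (the start of the relevant phase), so that the $2$-hop-maximality inequality is legitimate. Once these are in place, the remainder is the classical $H(\Delta+1)$ computation.
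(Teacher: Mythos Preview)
Your argument is correct and fully self-contained: the termination/domination part is fine, and the charging argument is the standard one, with the crucial adaptations for parallel selection handled correctly (disjoint closed neighbourhoods of same-phase black robots via the ID tie-break, and reading spans at the start of the relevant phase so that the $2$-hop-maximality comparison $w(b(u_j))\ge \mathrm{span}(v^*)$ is valid).

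The paper, by contrast, does not carry out any of this analysis. Its entire proof is a two-line reduction: it observes that Algorithm~\ref{alg: approx_DS} faithfully emulates the greedy distributed algorithm of Jia, Rajaraman and Suel (and its DDCH predecessor), and then invokes the $\ln(\Delta)$ approximation guarantee proved there. So the difference is one of depth rather than strategy: you have essentially reproduced the proof that the paper outsources to \cite{JRS02}. The benefit of your route is that it makes the lemma self-contained and pinpoints exactly where the algorithm's design choices (the $2$-hop maximum rule and the ID tie-break) are needed for the ratio; the paper's route is shorter and avoids repeating a known argument. Both arrive at the same $H(\Delta+1)\le 1+\ln(\Delta+1)$ bound, which is what ``$\ln(\Delta)$-approximation'' denotes here.
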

 
\begin{proof}
Our protocol emulates the greedy distributed algorithm in~\cite{JRS02} by ensuring that the node with the highest span within a two-hop neighbourhood becomes a part of the dominating set. Hence the $\ln(\Delta)$ approximation follows directly from the approximation results in~\cite{JRS02,lec_rw}.
\end{proof}
\begin{lemma}\label{lem: execute}
     Algorithm~\ref{alg: approx_DS} takes $O(n\Delta\log(\lambda))$ rounds to execute. 
\end{lemma}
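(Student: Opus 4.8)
The plan is to bound the total running time by multiplying the number of iterations of the outer \textbf{while} loop by the cost of one iteration, and then argue that the number of iterations is $O(n)$. One iteration of the while loop consists of the four stages, each of which is explicitly $O(\Delta\log\lambda)$ rounds (Stages~1--3 by the analysis of \textsc{Procedure\_MYN} in Lemma~\ref{lem: pairing}, and the colouring broadcast in Stage~4 again by Lemma~\ref{lem: pairing} since it is one more sweep of \textsc{Procedure\_MYN}). Hence a single iteration costs $O(\Delta\log\lambda)$ rounds; the resetting of $span$ values is local and free. So it remains to show that the while loop terminates after $O(n)$ iterations.

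For the iteration count I would argue as follows: in each iteration, at least one robot that was previously not black becomes black. Indeed, as long as there is a white robot somewhere in the graph, consider (say) the robot $r$ of globally maximum $span$ among all robots with $span>0$, breaking ties by smallest ID; since $r$ has the maximum $span$ in the whole graph it certainly has the maximum $span$ within its own $2$-hop neighbourhood, so by the end of Stage~3 (using Lemma~\ref{lemma: 2_hops} to guarantee the $2$-hop values have propagated) $r$ recognises this and colours itself black in Stage~4. Once a robot is black it stays black, and a newly black robot has at least one white robot in its closed neighbourhood (namely itself, if it was white, or the white robot contributing to its positive span), which gets recoloured grey in the Stage~4 broadcast. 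Thus each iteration strictly decreases the number of white robots, which starts at $n$ and is always nonnegative; therefore the loop runs at most $n$ times. Multiplying, the total time is $O(n)\cdot O(\Delta\log\lambda)=O(n\Delta\log\lambda)$ rounds.

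The main obstacle I anticipate is making the "at least one new black robot per iteration" claim airtight in the presence of ties and of the $2$-hop contention rule: one must be sure that the tie-breaking by ID is consistent across all robots that compare spans, so that exactly the intended robot (the global-max, lowest-ID one) turns black and there is no deadlock where every robot defers to some neighbour. This is where Lemma~\ref{lemma: 2_hops} and the fixed ID-based tie-break do the work — they guarantee that every robot computes the same "winner" within any $2$-hop ball, so the global maximiser is a local maximiser for everyone who can see it and therefore does turn black. A secondary, minor point is the termination condition: the loop exits when no robot has a white neighbour, i.e. when every robot has $span\le\text{(its own white-indicator)}$; one should note that once the last white robot is recoloured this condition is met, so the $O(n)$ bound on iterations is not an overcount by more than a constant.
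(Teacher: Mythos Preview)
Your proof is correct and follows essentially the same approach as the paper: bound one iteration of the while loop by $O(\Delta\log\lambda)$ via the four stages, then argue there are at most $O(n)$ iterations because each iteration produces at least one new black robot. If anything, your argument is more careful than the paper's, which simply asserts that ``a single while loop gives us at least one black robot'' without spelling out the global-maximum-span justification you provide.
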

\begin{proof}
 There are four stages within a single while loop in Algorithm~\ref{alg: approx_DS}. The first stage starts by calculating the span of each robot which takes $O(\Delta\log(\lambda))$ rounds. In the next two stages of $O(\Delta\log(\lambda))$ rounds, the robots communicate their $span$ to all the robots with a distance of $2-hops$. In the final stage, when a robot gets a colour \textit{black} (the robot with the highest span in its $2-hop$ neighbourhood), it can take another $O(\Delta\log(\lambda))$ rounds to instruct its neighbouring robots to colour themselves as \textit{grey}. So, a single while loop from $span$ calculation till colouring robots as \textit{grey}; takes no more than $O(\Delta\log(\lambda))$ rounds. As the execution of a single while loop gives us at least one black robot
 , in the worst case, the algorithm needs at most $O(n)$ iterations of the while loop. Thus, giving us a complexity of $O(n\Delta\log(\lambda))$ rounds.
\end{proof}

\begin{theorem}
   Let $G$ be a $n$-node arbitrary, connected and anonymous graph with a maximum degree $\Delta$. Let $n$ mobile robots with distinct IDs are placed in a dispersed initial configuration. Then, a $ln{(\Delta)}$-approximation solution to the minimum dominating set for the graph $G$ can be found in $O(n\Delta\log(\lambda))$ rounds using Algorithm~\ref{alg: approx_DS} with $O(\log (n))$ bits of memory per robot, where $\lambda$ is the maximum length of the ID-string of the robots.  
\end{theorem}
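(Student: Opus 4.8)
The plan is to assemble the three lemmas that have already been proved in this section into a single statement. The theorem makes three claims: correctness (the output is a dominating set), the approximation guarantee ($\ln(\Delta)$ factor), and the time and memory bounds. First I would observe that correctness is essentially built into the termination condition of Algorithm~\ref{alg: approx_DS}: the while loop runs precisely until no robot has a \emph{white} robot in its neighbourhood, which by definition means every robot is either \emph{black} (in $D$) or \emph{grey} (adjacent to a \emph{black} robot), so $D$ dominates $G$. To make this rigorous I would argue that the loop actually terminates, i.e.\ that every iteration colours at least one \emph{white} robot \emph{black}: as long as some robot is \emph{white}, the robot of maximum span within its $2$-hop neighbourhood has positive span (it counts itself), and by Lemma~\ref{lemma: 2_hops} that maximum is correctly propagated, so some robot colours itself \emph{black} and then greys its \emph{white} neighbours in Stage~4.

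Next I would invoke Lemma~\ref{lem: approximation} for the approximation ratio. The key point to spell out is why emulating the greedy rule ``pick the vertex of maximum span'' at the granularity of $2$-hop neighbourhoods still yields the same $\ln(\Delta)$ bound as the centralized/CONGEST greedy algorithm of~\cite{JRS02,LH00}: a vertex only needs its span to be maximal among vertices that could conflict with it, namely those within distance $2$, because two vertices at distance $\ge 3$ never share a white neighbour and hence their greedy choices are independent and can be made in parallel without affecting the analysis. Since Lemma~\ref{lem: approximation} already records this, here I would simply cite it.

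Then I would invoke Lemma~\ref{lem: execute} for the running time: each while-loop iteration consists of four stages of $O(\Delta\log\lambda)$ rounds each (Stages~1 and~4 by Lemma~\ref{lem: pairing}, Stages~2 and~3 by Lemma~\ref{lemma: 2_hops}), hence $O(\Delta\log\lambda)$ rounds per iteration, and since each iteration produces at least one new \emph{black} robot there are at most $n$ iterations, giving $O(n\Delta\log\lambda)$ rounds total. For the memory bound I would repeat the accounting used in the rooted case: the ID field takes $O(\log n)$ bits, the span variable stores an ID together with a count in $[0,\Delta]$, i.e.\ $O(\log n + \log\Delta) = O(\log n)$ bits, the colour takes $O(1)$ bits, and the \textsc{Procedure$\_$MYN} bookkeeping (a phase counter up to $\log\lambda$ and a round counter up to $\Delta$, plus parent/child port numbers inherited from the dispersion of~\cite{KS21}) is $O(\log\Delta)$ bits; summing gives $O(\log n)$ bits per robot. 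Combining these three facts yields the theorem.

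I do not expect a serious obstacle, since every ingredient is a previously established lemma; the only thing requiring care is making explicit that the loop terminates (so that the ``$\le n$ iterations'' bound and the dominating-set property are both justified), and confirming that when the robots start in an arbitrary configuration the preliminary dispersion of~\cite{KS21} costs only $O(m) = O(n\Delta)$ additional rounds and $O(\log(n+\Delta))$ memory, which is absorbed into the stated bounds.
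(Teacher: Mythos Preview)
Your proposal is correct and follows essentially the same approach as the paper: the paper's proof simply cites Lemma~\ref{lem: approximation} for the $\ln(\Delta)$ approximation and Lemma~\ref{lem: execute} for the $O(n\Delta\log(\lambda))$ running time, and concludes. Your version is more thorough---you spell out termination, the memory accounting, and the absorption of a preliminary dispersion step---but these are elaborations of the same argument rather than a different route; note also that the theorem assumes a dispersed initial configuration, so the dispersion step is not actually needed here.
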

\begin{proof}
    From Lemma~\ref{lem: approximation}, we know that Algorithm~\ref{alg: approx_DS} provides a $ln (\Delta$) approximation solution. And from Lemma~\ref{lem: execute} we know that it takes at most $O(n\Delta\log(\lambda))$ rounds to execute. Thus, the theorem.
\end{proof}





\section{Conclusion and Future Work}
In this paper, we solved the problem of finding the dominating set of a graph $G$ using mobile agents. When the agents start at a single source, we are able to find a minimal dominating set in $O(m)$ rounds. On the other hand, for the multi-source starting configuration, the robots obtained a minimal dominating set in $O(\ell\Delta\log(\lambda)+n\ell+m)$ rounds. Additionally, the dominating sets obtained by these two algorithms also serve as a maximal independent set for the graph $G$. In the last section, we described an approximation algorithm that gave us a dominating set with a size that is optimal within a factor of $\ln(\Delta)$. The approximation algorithm had a running time of  $O(n\Delta\log(\lambda))$ rounds. 

In future work, it would be interesting to investigate the lower bounds - in terms of time, memory and number of robots being used, for the dominating set problem in the same distributed model. In our paper, the dominating sets are also MIS, so, it would be interesting to see if it's possible to produce dominating sets that are also connected, i.e., connected dominating sets. It would also be exciting to explore other classical graph problems, such as ruling sets, colouring etc.,  And of course, given practical real-world concerns, adaption of the algorithms for faulty robots is an important aspect to be considered for the future as well.

\bibliographystyle{splncs04}
\bibliography{reference}





\end{document}